\newcommand{\remove}[1]{}
\newcommand{\dist}{{\rm dist}}
\newcommand{\depth}{{\rm depth}}
\newcommand{\radius}{{\rm rad}}
\newcommand{\col}{{\rm color}}
\newcommand{\ww}{{\mathfrak{w}}}
\renewcommand{\multirowsetup}{\centering}
\begin{document}

\mainmatter  % start of an individual contribution

% first the title is needed
\title{Oblivious Buy-at-Bulk in Planar Graphs\thanks{This work is supported by NSF grant CNS-084608.}
}
% a short form should be given in case it is too long for the running head
\titlerunning{Oblivious Buy-at-Bulk in Planar Graphs}

\author{Srivathsan Srinivasagopalan \and Costas Busch \and S.S. Iyengar}
\institute{Computer Science Department \\
						Louisiana State University \\
						\email{\{ssrini1, busch, iyengar\}@csc.lsu.edu}}

\authorrunning{Srivathsan Srinivasagopalan, Costas Busch, S.S. Iyengar}

\date{}
\maketitle

\begin{abstract}
In the oblivious buy-at-bulk network design problem in a graph,
the task is to compute a fixed set of paths for every pair of source-destinations in the graph,
such that any set of demands can be routed along these paths. 
The demands could be aggregated at intermediate edges 
where the fusion-cost is specified by a canonical (non-negative concave) function $f$.
We give a novel algorithm for planar graphs which is oblivious 
with respect to the demands, and is also oblivious with respect to the fusion function $f$. 
The algorithm is deterministic and computes 
the fixed set of paths in polynomial time,
and guarantees a $O(\log n)$ approximation ratio for any set of demands and any
canonical fusion function $f$,
where $n$ is the number of nodes. 
The algorithm is asymptotically optimal, since it is known that this problem
cannot be approximated with better than $\Omega(\log n)$ ratio.
To our knowledge,
this is the first tight analysis for planar graphs,
and improves the approximation ratio by a factor of $\log n$ with respect 
to previously known results.	
\end{abstract}

\section{Introduction}
\label{section:Introduction}

A typical client-server model has many clients and multiple servers where a subset of the client set wishes to route a certain amount of data to a subset of the servers at any given time. The set of clients and the servers are assumed to be geographically far apart. To enable communication among them, there needs to be a network of cables deployed. Moreover, the deployment of network cables has to be of minimum cost that also minimizes the communication cost among the various network components. This is what we roughly call as a typical network design problem. The same problem can be easily applied to many similar practical scenarios such as oil/gas pipelines and the Internet.

The ``Buy-at-Bulk'' network design considers the economies of scale into account. As observed in \cite{1170547}, in a telecommunication network, bandwidth on a link can be purchased in some discrete units $u_1 < u_2 < \dots < u_j$ with costs $c_1 < c_2 < \dots < c_j$ respectively. The economies of scale exhibits the property where the cost per bandwidth decreases as the number of units purchased increases: $c_1/u_1 > c_2/u_2 > \dots c_j/u_j$. This property is the reason why network capacity is bought/sold in ``wholesale'', or why vendors provide ``volume discount''.

The generalized form of the buy-at-bulk problem is where there are multiple demands 
from sources to destinations, and it is commonly referred as \emph{Multi-Sink Buy-at-Bulk} (MSBB). Typically, the demand flows are in discrete units and are unsplittable (indivisible), i.e., the flow follows a single path from the demand node to its destination. These problems are often called ``discrete cost network optimization'' in operations research.

As mentioned in \cite{644191}, if information flows from $x$ different sources over a link, then, the cost of total information that is transmitted over that link is proportional to $f(x)$, where $f: \mathbb{Z}^+ \rightarrow \mathbb{R}^+$. The function $f$ is called a \textit{canonical} fusion function if it is concave, non-decreasing, $f(0)=0$ and has the subadditive property $f(x_1 + x_2) \leq f(x_1) + f(x_2)$, $\forall x_1,x_2, (x_1+x_2) \in \mathbb{Z}^+$. Generally, MSBB problems use the subadditive property to ensure that the `size' of the aggregated data is smaller than the sum of the sizes of individual data.

We study the \emph{oblivious} buy-at-bulk network design problem (MSBB) with the following constraints: an unknown set of demands and an unknown concave fusion cost function $f$. An abstraction of this problem can be found in many applications, one of which is data fusion in wireless sensor networks where data from sensors is aggregated over time in multiple sinks.
Other application include Transportation \& Logistics (railroad, water, oil, gas pipeline construction) etc.
Many of these problems are formulated as networks on a plane that can be mapped to planar graphs.

\subsection{Problem Statement}

Assume that we are given a weighted graph $G = (V, E, \ww)$, with edge weights $\ww: E \longrightarrow \mathbb{Z}^{+}$. We denote $\ww_e$ to be the weight of edge $e$. Let $d_i = (s_i, t_i)$ be a unit of demand that induces an unsplittable unit of flow from source node $s_i \in V$ to destination node $t_i \in V$. Let $A = \{d_1, d_2, \ldots, d_r\}$ be a set of demands that are routed through paths in $G$. It is possible that some paths may overlap. The flow of these demands forms a set of paths $P(A) = \{p(d_1), p(d_2), \ldots, p(d_r)\}$. 
%For a given graph $G$ and a set of demands $A$, the output is a set of paths $P(A)$. 
%We seek to find such a set of paths with minimal cost with respect to a cost function described below.
\begin{comment}
Consider a set of source-destination pairs $(s_i, t_i)$ where source $s_i$ needs to send a unit of demand to its destination $t_i$. Let $S = \{s_1, s_2, \ldots, s_d\}$, $S \subseteq V$ be the set of demand nodes and let $T = \{t_1, t_2, \ldots, t_d\}$, $T \subseteq V$ be the set of destination nodes where $S \cap T = 0$. Let each node $s_i \in S$ have a non-negative unit demand. A demand from $s_i$ induces a unit of flow to its destination $t_i$ and this flow is unsplittable. The demands from various demand nodes $s_i$ have to be sent to their respective destination nodes $t_i$ possibly routed through multiple edges in the graph $G$. This forms a set of paths $P(M) = \{p(s_1, t_1), p(s_2, t_2), \ldots, p(s_d, t_d)\}$, where $\forall (s_i, t_i) \in M$, $p(s_i, t_i)$ is the path taken by a unit of flow from $s_i$ to $t_i$ in $G$ and $M$ is the Cartesian product of $S$ and $T$ given by $M = S \times T = \{(s_i,t_i) | (s_i \in S) \wedge (t_i \in T)\}$. For a given graph $G$ and a set of source-destination pairs $M$, the output is a set of paths $P(M)$. We seek to find such a set of paths with minimal cost with respect to a cost function described below.
\end{comment}

There is an arbitrary canonical function $f$ at every edge where data aggregates. 
This $f$ is same for all the edges in $G$. 
Let $\varphi_e(A)= \{p(d_i): e \in p(d_i)\}$ 
denote the set of paths that use an edge $e \in E$. 
Then, we define the cost of an edge $e$ to be $C_e(A) = f(|\varphi_e(A)|) \cdot w_e$. 
The total cost of the set of paths is defined to be $C(A) = \sum_{e \in E}{C_e(A)}$.
%For a given set $A$ in $G$, the corresponding set of paths $P(A)$ would incur a total cost denoted by $C(A)$.
For this set $A$, there is an optimal set of paths $P^*(A)$ with respective cost $C^*(A)$. 
The approximation ratio for the paths $P(A)$ is defined as $\frac{C(A)}{C^*(A)}$.
The MSBB optimization problem on input $A$ is to find a set of paths $P(A)$ 
that minimizes the approximation ratio.
We note that MSBB is NP-Hard as the Steiner tree problem is its special case (when $f(x) = 1$ and when there is only one destination node) \cite{589033}.

An {\em oblivious algorithm} ${\cal A}_{obl}$ for the MSBB problem,
computes a fixed set of paths, denoted $P(G)$ 
for every pair of source destination nodes in $V$.
Given any set of demands $A$, the path $p(d_i)$ for each $d_i = (s_i, t_i) \in A$,
is the fixed path in $P(G)$ from $s_i$ to $t_i$.
This gives a set of paths $P(A)$ to route the demands $A$.
We define the {\em approximation ratio} of ${\cal A}_{obl}$, as: 
\[
    A.R.({\cal{A}}_{obl}) = \max_{A}\frac{C(A)}{C^*(A)}.
\]
We aim to find algorithms that minimizes the above approximation ratio
for any canonical function $f$ which is unknown to the algorithm.
The best known oblivious algorithm is by by Gupta {\it et al.} \cite{1109665}
and provides approximation ratio $O(\log^2 n)$ for general graphs.
No better result is known for planar graphs. 
This problem is NP-hard, since MSBB is NP-hard.

\subsection{Contribution}
We provide an oblivious algorithm {\sf FindPaths} for MSBB problems in planar graphs.
Our algorithm is deterministic and computes in polynomial time 
a fixed set of paths that guarantees $O(\log n)$-approximation ratio 
for any canonical function $f$ (where $f$ is unknown to the algorithm).
We also give a lower bound for the approximation ratio for $A$ to be of $\Omega(\log n)$,
where $n$ is the number of nodes in the graph.
A lower bound of $\Omega(\log n)$ for planar graphs is provided in the context 
of the online Steiner tree problem by Imase and Waxman \cite{imase}. 
Thus, our bound is tight with respect to planar graphs.
It is also a $\log n$ factor improvement over the best previously known result \cite{1109665}.

%\subsection{Our Techniques}

We build the set of paths based on sparse covers
(see \cite{355459} for an overview of sparse covers).
A $\gamma$-cover consists of clusters where for each node there is some cluster 
that contains its $\gamma$-neighborhood.
We construct $O(\log n)$ levels of covers with exponentially increasing 
locality parameter $\gamma$.
For every cluster we elect a leader.
For any pair of nodes $u, v$ we identify an appropriate common lowest-level cluster
that contains both $u$ and $v$,
and the cluster has a respective common leader $\ell$.
Then the path from $u$ to $v$ 
is formed by connecting successive path segments emanating from both $u$ and $v$ 
and using intermediate leaders of lower level clusters
until the common leader $\ell$ is reached.

In the analysis, we introduce the notion of coloring sparse covers,
where two clusters that are close receive different color.
We show the existence of a sparse cover with constant coloring 
(based on the sparse covers in \cite{1281112}).
This enables us to obtain optimal approximation at every level.
When we combine all the levels, we get an $O(\log n)$ approximation.

\subsection{Related Work}

\subsubsection{Oblivious Network Design}
Below, we present the related work on oblivious network design and Table \ref{table:comparison} summarizes some results and compares our work with their's. What distinguishes our work with the others' is the fact that we provide a set of paths for the MSBB problem while others provide an overlay tree for SSBB version.

%%%%%%%%%%%%%%%%%%%%%%%%%%%%%%%%    NEW NEW NEW WORKING TABLE   %%%%%%%%%%%%%%%%%%%%%%%%%%%%%%%%%%%%%%%%%
\renewcommand{\multirowsetup}{\centering}
\newcommand {\otoprule}{\midrule [\heavyrulewidth]}  % This is for getting a bold horizontal line in the middle of the table.

\begin{table}[t]
 		\caption{Our results and comparison with previous results for data-fusion schemes. $n$ is the total number of nodes in the topology, $k$ is the total number of source nodes.}
  \footnotesize																										 
 	\begin{tabular}{*{3}{m{2.2cm}}*{2}{>{$}c<{$}}*{1}{>{$}l<{$}}}
    \otoprule%\toprule[2pt]
    %\multirow{2}*{Related \\Work}
    \multicolumn{1}{b{1.2cm}}{\bfseries Related Work}
    & \multicolumn{1}{b{1.2cm}}{\bfseries Algorithm Type}
    & \multicolumn{1}{b{1.5cm}}{\bfseries Graph Type}
    & \multicolumn{1}{b{1.8cm}}{\bfseries Oblivious Function $f$}
    & \multicolumn{1}{b{1.5cm}}{\bfseries Oblivious Sources}
    & \multicolumn{1}{b{1.5cm}}{\bfseries Approx Factor} \\ % width changes the column header field width.
    \otoprule %\hline
    \raggedright Lujun Jia\\ \textit{et al.} \cite{JiaNRS06}
    & Deterministic
    & \raggedright Random\\Deployment
    & \times
    & \checkmark
    & O(\log n)\\
    \hline
    \multirow{2}*{}Lujun Jia
    & Deterministic  & Arbitrary Metric   & \times & \checkmark & O(\frac{\log^4 n}{\log \log(n)})\\
    \textit{et al.} \cite{1060649}
    & Deterministic  & Doubling Metric   & \times & \checkmark & O(\log(n))\\
    \hline
    \raggedright Ashish Goel\\ \textit{et al.} \cite{644191}
    & Randomized
    & General Graph $\bigtriangleup$-inequality
    & \checkmark
    & \times
    & O(\log k) \\
    \hline
    \raggedright Ashish Goel\\ \textit{et al.} \cite{10.1109/FOCS.2009.41}
    & Probabilistic
    & General Graph
    & \checkmark
    & \times
    & O(1) \\
    \hline
    \multirow{2}*{}Anupam Gupta  %\textit{et al.} \cite{1109665}
    & Randomized  & General Graph   & \checkmark & \checkmark & O(\log^2 n)\\
    \textit{et al.} \cite{1109665}
    & Randomized & Low Doubling    & \checkmark & \checkmark  & O(\log n)\\
    \hline
    \rowcolor[gray]{.95}
    This paper
    & Deterministic
    & \raggedright Planar
    & \checkmark
    & \checkmark
    & O(\log n)\\
    \bottomrule

  \label{table:comparison}
  \end{tabular}
\end{table}
%%%%%%%%%%%%%%%%%%%%%%%%%%%%%%%%%%%%%%%%%%%%%%%%%%%%%%%%%%%%%%%%%%%%%%%%%%%%%%%%%%%%%%%%%%%%%%%%%

Goel \textit{et al.} in \cite{644191} build an overlay tree on a graph that satisfies triangle-inequality. Their technique is based on maximum matching algorithm that guarantees $(1 + \log k)$-approximation, where $k$ is the number of sources. Their solution is oblivious with respect to the fusion cost function $f$. In a related paper \cite{10.1109/FOCS.2009.41}, Goel \emph{et al.} construct (in polynomial time) a set of overlay trees from a given general graph such that the expected cost of a tree for any $f$ is within an $O(1)$-factor of the optimum cost for that $f$.

Jia \textit{et al.} in \cite{JiaNRS06} build a Group Independent Spanning Tree Algorithm (GIST) that constructs an overlay tree for randomly deployed nodes in an Euclidean 2 dimensional plane. The tree (that is oblivious to the number of data sources) simultaneously achieves $O(\log n)$-approximate fusion cost and $O(1)$-approximate delay. However, their solution assumes a constant fusion cost function. We summarize and compare the related work in Table \ref{table:comparison}.

Lujun Jia \textit{et al.} \cite{1060649} provide approximation algorithms for TSP, Steiner Tree and set cover problems. They present a polynomial-time $(O(\log(n)),O(\log(n)))$-partition scheme for general metric spaces. An improved partition scheme for doubling metric spaces is also presented that incorporates constant dimensional Euclidean spaces and growth-restricted metric spaces. The authors present a polynomial-time algorithm for Universal Steiner Tree (UST) that achieves polylogarithmic stretch with an approximation guarantee of $O(\log^4 n/\log \log(n))$ for arbitrary metrics and derive a logarithmic stretch, $O(\log(n))$ for any doubling, Euclidean, or growth-restricted metric space over $n$ vertices. They provide a lower bound of $\Omega(\log n/ \log \log n)$ for UST that holds even when all the vertices are on a plane.

Gupta \textit{et al.} \cite{1109665} develop a framework to model \textit{oblivious network design} problems (MSBB) and give algorithms with poly-logarithmic approximation ratio. They develop oblivious algorithms that approximately minimize the total cost of routing with the knowledge of aggregation function, the class of load on each edge and nothing else about the state of the network. Their results show that if the aggregation function is summation, their algorithm provides a $O(\log^2 n)$ approximation ratio and when the aggregation function is $max$, the approximation ratio is $O(\log^2 n \log \log n)$. The authors claim to provide a deterministic solution by derandomizing their approach. But, the complexity of this derandomizing process is unclear.

\begin{comment}
Chuzhoy \textit{et al.} \cite{1361200} consider the Fixed Charge Network Flow (FCNF) problem and show that this problem and several other basic network design problems cannot be approximated better than $\Omega(\log \log n)$ unless $NP \subseteq DTIME (n^{O(\log\log\log n)})$. They show that this inapproximability threshold holds for the Priority-Steiner Tree problem, single-sink Cost-Distance problem and the single-sink FCNF problem.
\end{comment}

\vspace{-0.4cm}
\subsubsection{Non-Oblivious Network Design}
There has been a lot of research work in the area of approximation algorithms for network design. Since network design problems have several variants with several constraints, only a partial list has been mentioned in the following paragraphs.

The ``single-sink buy-at-bulk'' network design (SSBB) problem has a single ``destination'' node where all the demands from other nodes have to be routed to. Network design problems have been primarily considered in both Operations Research and Computer Science literatures in the context of flows with concave costs. The single-sink variant of the problem was first introduced by Salman \textit{et al.} \cite{589033}. They presented an $O(\log n)$-approximation for SSBB in Euclidean graphs by applying the method of Mansour and Peleg \cite{903727}. Bartal's tree embeddings \cite{bartal-tree} can be used to improve their ratio to $O(\log n \log\log n)$. A $O(\log^2 n)$-approximation was given by Awerbuch \textit{et al.} \cite{796341} for graphs with general metric spaces. Bartal \textit{et al.} \cite{276725} further improved this result to $O(\log n)$. Guha \cite{380827} provided the first constant-factor approximation to the problem, whose ratio was estimated to be around 9000 by Talwar \cite{660077}.

\subsubsection{Organization}
%\paragraph{Organization}
In the next section, we present some definitions and notations used throughout the rest of the paper. Section \ref{section:Covers} describes the concept of sparse covers and shortest path clustering. In addition, we derive a coloring for the cover. In section \ref{sec:Algorithm}, we describe {\sf{FixedPaths}} and {\sf{FindPaths}} algorithms that build a set of shortest paths between all pairs of nodes in $G$. Section \ref{section:Analysis} provides the analysis of the {\sf{FindPaths}} algorithm as well as the main theorem of this paper. Finally, we discuss our contribution and future work in section \ref{section:Conclusions}.

%\vspace{-1cm}

%----------------------------------------------------------
\section{Definitions}
\label{section:definitions}
%----------------------------------------------------------
Consider a weighted graph $G = (V,E,\ww)$,
where $\ww : E \rightarrow \mathbb{Z}^+$.
For any two nodes $u,v \in V$,
their {\em distance} $\dist(u,v)$ is the length of the shortest path
that connects the two nodes in $G$.
We denote by $N_k(v)$ the {\em $k$-neighborhood} of $v$
which is the set of nodes distance at most $k$ from $v$.
For any set of nodes $S \subseteq V$,
we denote by $N_k(S)$ the $k$-neighborhood of $S$
which contains all nodes which are within distance $k$ from
any node in $S$.

A set of nodes $X \subseteq V$
is called a {\em cluster} if the induced subgraph $G(X)$ is connected.
Let $Z = \{X_1, X_2, \ldots, X_k \}$ be a set of clusters in $G$.
For every node $v \in G$, let $Z(v) \subseteq Z$ denote the set
of clusters that contain $v$.
The {\em degree} of $v$ in $Z$ is defined as $\beta_v(Z) = |Z(v)|$,
which is the number of clusters that contain $v$.
The degree of $Z$ is defined as
$\beta(Z) = \max_{v \in V} \beta_v(Z)$,
which is largest degree of any of its nodes.
The radius of $Z$ is defined as $\radius(Z) = max_{X \in Z} (\radius(X))$.

Consider a {\em locality parameter} $\gamma > 0$.
A set of clusters $Z$ is said to
{\em $\gamma$-satisfy} a node $v$ in $G$, if there is a cluster $X \in Z$,
such that the $\gamma$-neighborhood of $v$, $N_{\gamma}(v)$,
(nodes within distance $\gamma$ from v) is included in $X$,
that is, $N_{\gamma}(v) \subseteq X$.
A set of clusters $Z$ is said to be a {\em $\gamma$-cover} for $G$,
if every node of $G$ is $\gamma$-satisfied by $Z$ in $G$.
The {\em stretch} $\sigma(Z)$ of a $\gamma$-cover $Z$ is the smallest number
such that $\radius(Z) = \sigma(Z) \cdot \gamma$.

We define the following coloring problem in a set of clusters $Z$.
We first define the notion of the distance between
two clusters $X_i, X_j \in Z$, $X_i \neq X_j$.
We say that
$\dist(X_i, X_j) \leq k$,
if there is a pair of nodes $u \in X_i$ and $v \in X_j$
such that $u$ is $k$-satisfied in $X_i$, $v$ is $k$-satisfied
in $X_j$, and $\dist(u,v) \leq k$.
A valid distance-$k$ coloring of $Z$ with a palette of $\chi$ colors $[1,\chi]$,
is an assignment of an integer $\col(X) \in [1,\chi]$ to every $X \in Z$,
such that there is no pair of clusters $X_i, X_j \in Z$, $X_i \neq X_j$,
with $\dist(X_i, X_j) \leq k$ which receive the same color.
The objective is to find the smallest $\chi$ that permits a valid distance-$k$ coloring.

%----------------------------------------------------------
\section{Sparse Cover}
\label{section:Covers}
%----------------------------------------------------------

A $\gamma$-cover is {\em sparse} if it has small degree and stretch.
In \cite[Section 5]{1281112}
the authors present a polynomial time sparse cover construction
algorithm {\sf Planar-Cover}$(G,\gamma)$
for any planar graph $G$ and locality parameter $\gamma$,
which finds a $\gamma$-cover $Z$
with constant degree, $\beta \leq 18$,
and constant stretch, $\sigma \leq 24$.
Here, we show that this cover also
admits a valid distance-$\gamma$
coloring with a constant number of colors $\chi \leq 18$.

For any node $v \in G$, we denote by $\depth_v(G)$ the shortest
distance between $v$ and an external node (in the external face) of $G$.
We also define $\depth(G) = \max_{v \in V} \depth_v(G)$.
The heart of sparse cover algorithm in \cite[Section 5]{1281112}
concerns the case where $\depth(G) \leq \gamma$
which is handled in Algorithm {\sf Depth-Cover}$(G, \gamma)$.
The general case, $\depth(G) > \gamma$,
is handled by diving the graph into zones of depth $O(\gamma)$,
as we discuss later.
So, assume for now that $\depth(G) \leq \gamma$.

The Algorithm {\sf Depth-Cover}$(G, \gamma)$,
relies on forming clusters along shortest paths
connecting external nodes (in the external face) of $G$.
For every shortest path $p$,
Algorithm {\sf Shortest-Path-Cluster}$(G,p,4\gamma)$ in \cite[Section 3]{1281112}
returns a set of clusters around the $4\gamma$ neighborhood of $p$
with radius at most $8\gamma$ and degree $3$.
Then, $p$ and all its $2\gamma$-neighborhood is removed from $G$
producing a smaller subgraph $G'$ (with possibly multiple connected components).
The algorithm proceeds recursively on each connected component $H$ of $G'$
by selecting an appropriate new shortest path $p'$ between external nodes of $H$.
The algorithm terminates when all the nodes have been removed.
The initial shortest path that starts the algorithm
consists of a single external node in $G$.
The resulting $\gamma$-cover $Z$
consists of the union of all the clusters from all the shortest paths.
The shortest paths are chosen in such a way that a node participates
in the clustering process of at most 2 paths, and this bounds the degree of
the $\gamma$-cover to be at most $\beta \leq 6$, and stretch $s \leq 8$.

The analysis in \cite[Section 5.1.1]{1281112}
of Algorithm {\sf Depth-Cover}
relies on representing the clustering process of $G$ as a tree $T$
as we outline here.
Each tree node $w \in T$ represents a pair $w = (G(w),p(w))$
where $G(w)$ is a planar subgraph of $G$ that is to be clustered,
and $p(w)$ is a shortest path between two external nodes of $G(w)$.
The root of the tree is $r = (G, v)$,
where $v$ is a trivial initial path with one external node $v \in G$.
The children of a node $w \in T$ are all the nodes $w' = (G(w'),p(w'))$,
such that $G(w')$ is a connected component that results from
removing $p(w)$ and its $2\gamma$-neighborhood from $G$.

Next, we extend \cite[Lemma 3.1]{1281112}
to show that we can color the clusters obtained by
a shortest path clustering using a constant number of colors. The proof of the following three lemmas is given in the appendix.

\begin{lemma}
\label{lemma:color-path}
For any graph $G$, shortest path $p \in G$,
the set of clusters returned by
Algorithm {\sf Shortest-Path-Cluster($G,p, 4\gamma$)}
admits a valid distance-$\gamma$ coloring with $3$ colors.
\end{lemma}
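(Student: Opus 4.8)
Recall from \cite[Section 3]{1281112} that a single call to Algorithm {\sf Shortest-Path-Cluster}$(G,p,4\gamma)$ produces a chain of clusters $X_1,X_2,\ldots,X_m$ laid out along the shortest path $p$: each $X_i$ is grown around a sub-path (its \emph{core}) $I_i$ of $p$, the cores are pairwise disjoint, consecutively ordered, and together cover all of $p$; each full core has length $\Theta(\gamma)$; and $X_i\subseteq N_{4\gamma}(I_i)$, which is exactly what yields radius at most $8\gamma$ and degree $3$. The plan is to use the periodic coloring $\col(X_i)=1+(i\bmod 3)$ and to verify it is a valid distance-$\gamma$ coloring; since only three color values occur, this proves the lemma.

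\textbf{Key step.}
Suppose two clusters $X_i,X_j$ with $i<j$ receive the same color; then $j-i$ is a nonzero multiple of $3$, so $j-i\ge 3$, and it suffices to show $\dist(X_i,X_j)>\gamma$. Assume not: then there are nodes $u\in X_i$ and $v\in X_j$ that are $\gamma$-satisfied in $X_i$ and $X_j$ respectively, with $\dist(u,v)\le\gamma$. Because $u\in X_i\subseteq N_{4\gamma}(I_i)$ and $v\in X_j\subseteq N_{4\gamma}(I_j)$, the triangle inequality gives $\dist(I_i,I_j)\le 4\gamma+\gamma+4\gamma=9\gamma$. On the other hand, the endpoints of $I_i$ and $I_j$ all lie on the shortest path $p$, and for any two nodes on a shortest path the graph distance equals their distance measured along that path (immediate from $\dist(a,b)=\dist(a,x)+\dist(x,y)+\dist(y,b)$ for the endpoints $a,b$ of $p$ together with the triangle inequality). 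Hence $\dist(I_i,I_j)$ equals the total length of the cores strictly between $I_i$ and $I_j$, which is at least $(j-i-1)$ times the minimum (full) core length. Since that minimum core length is a sufficiently large constant multiple of $\gamma$ and $j-i-1\ge 2$, this total already exceeds $9\gamma$, a contradiction. Therefore the coloring is valid.

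\textbf{Main obstacle.}
The delicate point is the interaction between the definition of $\dist(X_i,X_j)$ --- which quantifies over $\gamma$-satisfied nodes --- and the constant in the last step. A node that is $\gamma$-satisfied in $X_i$ need not lie on the core $I_i$; a priori it can sit anywhere in $N_{4\gamma}(I_i)$, so the $9\gamma$ slack above is essentially all one gets from the neighborhood bound alone. To make the period-$3$ coloring go through one must therefore either (i) read off from the construction in \cite[Section 3]{1281112} that every full core has length exceeding $9\gamma/2$, so that two intervening cores outweigh the slack, or (ii) sharpen the slack by arguing that a node $\gamma$-satisfied in $X_i$ must in fact lie within roughly $3\gamma$ of $I_i$. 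One should also check the possibly short leftover cores at the two ends of $p$, but those only shorten $I_1$ or $I_m$ themselves, not the full cores lying strictly between two equally-colored clusters, so they do not affect the bound. Carrying out (i) or (ii) with the precise parameters of {\sf Shortest-Path-Cluster} is the only real work; the remainder is the shortest-path distance identity and bookkeeping.
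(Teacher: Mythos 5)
Your overall plan matches the paper's — color $X_i$ with $1+(i\bmod 3)$ and rule out two same-colored clusters being within distance $\gamma$ by comparing distances along the shortest path — and you have honestly flagged the exact spot where the constants become tight. But neither of the two repairs you propose actually closes that gap with the real parameters of {\sf Shortest-Path-Cluster}. The cores $p_i$ have length $4\gamma$, not more than $9\gamma/2 = 4.5\gamma$, so option (i) fails outright. Option (ii) as stated is also not a fact about the construction: a node $u$ at distance exactly $4\gamma$ from $I_i$ can perfectly well be $\gamma$-satisfied in $X_i=N_{4\gamma}(I_i)$ (e.g.\ when $u$ sits at a dead end whose whole $\gamma$-neighborhood points back toward $I_i$), so there is no guarantee that $\gamma$-satisfied nodes live within $3\gamma$ of the core. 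Your argument only ever uses $\dist(u,v)\le\gamma$ as a raw triangle-inequality term, which is where the extra $\gamma$ of slack comes from and why you land at $9\gamma$ instead of the $8\gamma$ you would need.

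The paper closes the argument with a different, small but essential observation: use $v$'s $\gamma$-satisfaction, not just the bound $\dist(u,v)\le\gamma$. Since $v$ is $\gamma$-satisfied in $X_{i+3}$ we have $N_\gamma(v)\subseteq X_{i+3}$, and since $\dist(u,v)\le\gamma$ we get $u\in N_\gamma(v)\subseteq X_{i+3}$. So $u$ belongs to both $X_i$ and $X_{i+3}$, hence is within $4\gamma$ of both cores $p_i$ and $p_{i+3}$, giving a path of length at most $4\gamma+4\gamma=8\gamma$ between the cores — contradicting the fact that two full intervening cores $p_{i+1},p_{i+2}$ put them at distance at least $8\gamma+1$ along the shortest path. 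The point is that $u$ is made to double as a near-point of \emph{both} cores, which saves the $\gamma$ you were losing in your $4\gamma+\gamma+4\gamma$ chain. You should replace your triangle-inequality step with this membership argument to make the proof go through.
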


We can obtain a coloring of $Z$ by coloring
the respective levels of the tree $T$.
Assume that the root is at level $0$.

\begin{lemma}
\label{lemma:color-level}
The union of clusters in any level $i \geq 0$ of the tree $T$,
admits a valid distance-$\gamma$ coloring with 3 colors.
\end{lemma}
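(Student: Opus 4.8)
The plan is to leverage Lemma~\ref{lemma:color-path} together with the structural properties of the tree $T$ that represents the clustering process. Fix a level $i \geq 0$. The clusters appearing at level $i$ are exactly the union, over all tree nodes $w$ at level $i$, of the clusters produced by {\sf Shortest-Path-Cluster}$(G(w), p(w), 4\gamma)$. By Lemma~\ref{lemma:color-path}, for each single node $w$ the clusters coming from $p(w)$ can be colored with $3$ colors so that no two of them that are within distance $\gamma$ share a color. So the only thing left to rule out is a distance-$\gamma$ conflict between a cluster arising from $p(w)$ and a cluster arising from $p(w')$, where $w \neq w'$ are two \emph{different} tree nodes at the same level $i$. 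If we can show no such cross-conflict exists, then we may reuse the same $3$-color palette independently on every $w$ at level $i$, and we are done.

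First I would make precise why different level-$i$ subgraphs are ``far apart'' in the relevant sense. The subgraphs $G(w)$ for $w$ at level $i$ are (contained in) distinct connected components obtained after removing, at each earlier level, a shortest path together with its $2\gamma$-neighborhood. In particular, for two distinct level-$i$ nodes $w, w'$, their subgraphs $G(w)$ and $G(w')$ are separated in $G$ by some removed $2\gamma$-neighborhood of a path at a strictly smaller level; more carefully, any path in $G$ from a node of $G(w)$ to a node of $G(w')$ must pass through a node that was removed, i.e.\ a node lying within distance $2\gamma$ of some earlier path. The clusters built by {\sf Shortest-Path-Cluster}$(G(w), p(w), 4\gamma)$ live inside $G(w)$, hence inside that component. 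Now recall the definition of $\dist(X_i, X_j) \leq \gamma$: it requires nodes $u \in X_i$, $v \in X_j$ with $u$ being $\gamma$-satisfied in $X_i$, $v$ being $\gamma$-satisfied in $X_j$, and $\dist(u,v) \leq \gamma$. The key point I would push on is that if $u$ is $\gamma$-satisfied inside a cluster $X \subseteq G(w)$, then $N_\gamma(u) \subseteq G(w)$, so the entire ball of radius $\gamma$ around $u$ avoids the removed separators; similarly for $v$ in $G(w')$. But then a path of length $\leq \gamma$ from $u$ to $v$ would stay within $N_\gamma(u)$, hence within $G(w)$, contradicting that it must cross into $G(w')$ through a removed node. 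Therefore $\dist(X,X') > \gamma$ for any level-$i$ cluster $X \subseteq G(w)$ and $X' \subseteq G(w')$ with $w \neq w'$, and cross-conflicts are impossible.

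Combining the two observations: color the clusters at level $i$ by applying the $3$-coloring of Lemma~\ref{lemma:color-path} to each tree node $w$ at that level, all using the same palette $\{1,2,3\}$. Within a single $w$ there is no monochromatic distance-$\gamma$ pair by Lemma~\ref{lemma:color-path}; between distinct $w, w'$ at level $i$ there is no distance-$\gamma$ pair at all by the separation argument. Hence the union of all level-$i$ clusters admits a valid distance-$\gamma$ coloring with $3$ colors.

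I expect the main obstacle to be the separation argument in the middle paragraph: one has to be careful about what exactly gets removed at each level (the path $p(w)$ \emph{and} its $2\gamma$-neighborhood, per the description of {\sf Depth-Cover}) and to verify that the $\gamma$-satisfaction condition in the definition of inter-cluster distance is strong enough to force the connecting short path to stay inside a single component. A secondary subtlety is that the clusters from {\sf Shortest-Path-Cluster} have radius up to $8\gamma$, so a cluster $X \subseteq G(w)$ can physically extend close to the boundary of the component; the argument must rely on the $\gamma$-satisfied witness nodes $u, v$ (whose $\gamma$-balls are genuinely interior), not on the clusters' full extent. Once that is handled correctly, the rest is just bookkeeping with the palette.
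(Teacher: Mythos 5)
Your proof is correct and follows essentially the same approach as the paper: apply Lemma~\ref{lemma:color-path} within each tree node $w$ at level $i$, then rule out cross-conflicts between distinct $w, w'$ using disjointness of $G(w)$ and $G(w')$. The paper's separation step is even a bit more direct than yours: since $u$ being $\gamma$-satisfied in $X \subseteq G(w)$ gives $N_\gamma(u) \subseteq X \subseteq G(w)$, and $G(w)$ is disjoint from $G(w')$, any $v \in G(w')$ is immediately outside $N_\gamma(u)$, so $\dist(u,v) > \gamma$ without needing to reason about the path crossing a removed separator.
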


\begin{lemma}
\label{lemma:color-depth}
Algorithm {\sf{Depth-Cover}}$(G, \gamma)$
returns a set of clusters $Z$ which
admits a valid distance-$\gamma$ coloring with 6 colors.
\end{lemma}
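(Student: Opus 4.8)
The plan is to combine Lemma~\ref{lemma:color-level}, which colors the clusters appearing on any single level of the recursion tree $T$ with $3$ colors, with the (trivial) $2$-colorability of the levels of $T$ by parity. Write $Z_i$ for the set of clusters produced at level $i$ of $T$, so that $Z=\bigcup_{i\ge 0}Z_i$, and let $c_i:Z_i\to\{1,2,3\}$ be a valid distance-$\gamma$ coloring of $Z_i$ as guaranteed by Lemma~\ref{lemma:color-level}. To each cluster $X\in Z_i$ I would assign the final color $\col(X)=3\,(i\bmod 2)+c_i(X)\in[1,6]$. Everything then rests on a single separation statement about $T$.

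The key claim is: if $X,X'$ are distinct clusters of $Z$ with $\dist(X,X')\le\gamma$, where $X$ is produced at a tree node $w$ on level $i$ and $X'$ at a tree node $w'$ on level $j$, then $|i-j|\le 1$. To prove it, unfold the definition of the inter-cluster distance: $\dist(X,X')\le\gamma$ supplies nodes $u\in X$ and $v\in X'$ with $v$ being $\gamma$-satisfied in $X'$ and $\dist(u,v)\le\gamma$. Since $v$ is $\gamma$-satisfied in $X'$ we have $N_\gamma(v)\subseteq X'$, and since $\dist(u,v)\le\gamma$ we get $u\in N_\gamma(v)\subseteq X'$. Hence the single node $u$ belongs both to the cluster $X$ produced at $w$ and to the cluster $X'$ produced at $w'$; in other words, $u$ participates in the clustering process of both $p(w)$ and $p(w')$. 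If $w=w'$ then $i=j$ and we are done; otherwise, by the property of the construction of \cite[Section~5]{1281112} that bounds the cover degree — a node participates in the clustering process of at most two shortest paths, and when it participates in two, those two tree nodes are parent and child in $T$ — we conclude $|i-j|=1$.

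Granting the claim, validity of $\col$ is immediate. Suppose $X\in Z_i$ and $X'\in Z_j$ are distinct, $\dist(X,X')\le\gamma$, and $\col(X)=\col(X')$; then $c_i(X)=c_j(X')$ and $i\equiv j\pmod 2$, while the claim gives $|i-j|\le 1$, so $i=j$. But then $c_i(X)=c_i(X')$ for two distinct clusters of $Z_i$ at distance at most $\gamma$, contradicting the validity of $c_i$ from Lemma~\ref{lemma:color-level}. Thus $\col$ is a valid distance-$\gamma$ coloring of $Z$ with $6$ colors, which proves the lemma.

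I expect the only real work to be the structural input invoked in the second paragraph: that the (at most two) shortest paths in whose clustering a fixed node of $G$ takes part lie on consecutive levels of $T$. This is a refinement of the degree analysis in \cite[Section~5.1.1]{1281112}: a node not removed during the clustering of $p(w)$ sits in the $[2\gamma,4\gamma]$ shell around $p(w)$, is therefore within the $2\gamma$-neighborhood of a path chosen at the next level, and so is removed before the level after that — hence it can be active on at most two successive levels. I would either cite this directly from \cite{1281112} or re-derive it by following a root-to-leaf branch of $T$ and tracking how the successive $2\gamma$-neighborhood removals absorb the shell left over from the previous level; the remainder of the argument is just unwinding the definitions.
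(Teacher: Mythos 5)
Your proposal is correct and follows essentially the same route as the paper's proof: $3$ colors per level from Lemma~\ref{lemma:color-level}, a factor of~$2$ from the parity of the level, and the cited fact from \cite{1281112} (their Lemma~5.4) that a node is clustered on at most two consecutive levels of~$T$. The only cosmetic difference is in how the separation between levels $i$ and $i{+}2$ is extracted: you unfold the definition of $\dist(X,X')\le\gamma$ to place a single node $u$ in both clusters and then invoke the two-consecutive-levels property on $u$, whereas the paper argues that $v$ does not appear in the level-$(i{+}2)$ subgraph and hence cannot lie in the $\gamma$-neighborhood of a node $\gamma$-satisfied there; these are equivalent rephrasings of the same structural fact.
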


We are now ready to consider the case $\depth(G) > \gamma$.
Algorithm {\sf Planar-Cover}$(G, \gamma)$
decomposes the graph into a sequence of bands,
such that each band $W_i$ has depth $\gamma$.
The bands are organized into zones, such that zone $S_i$
consists of three consecutive bands $W_{i-1}, W_i, W_{i+1}$.
Thus, zone $S_i$ overlaps with bands $S_{i-2}$, $S_{i-1}$, $S_{i+1}$ and $S_{i+2}$.
The algorithm invokes {\sf Depth-Cover}$(S_i, 3\gamma-1)$ for each zone
giving a $\gamma$-cover $Z$ with degree $\beta \leq 3 \cdot 6 = 18$ and
stretch $\sigma \leq 3 \cdot 8 = 24$.

We can obtain the following coloring result.
Using Lemma \ref{lemma:color-depth},
for every zone $S_i$ we can get a valid distance-$(3\gamma-1)$ coloring
with a palette of $6$ colors.
This implies that we can obtain a valid distance-$\gamma$ coloring
for the zone with at most $6$ colors.
Zones $S_i$ and $S_{i+3}$ do not overlap
and any two nodes satisfied in them (one from each zone)
with respect to $G$
must be more than $\gamma$ distance apart.
Therefore,
we can color all the zones with three different palettes each consisting of 6 colors,
so that zone $S_{i}$,
uses the $((i \mod 3) + 1)$th palette.
The coloring can be found in polynomial time.
Therefore, we obtain:

\begin{theorem}
\label{theorem:color-main}
Algorithm {\sf{Planar-Cover}}$(G, \gamma)$
produces a set of clusters $Z$ which
has degree $\beta = 18$, stretch $\sigma \leq 24$,
and admits a valid distance-$\gamma$ coloring with $\chi = 18$ colors.
\end{theorem}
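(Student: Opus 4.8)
The plan is to read off the statement from the description of {\sf Planar-Cover}$(G,\gamma)$, handling the degree/stretch claims and the coloring claim separately and citing \cite[Section~5]{1281112} for the first two.

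\emph{Degree and stretch.} First I would recall the structure of {\sf Planar-Cover}$(G,\gamma)$: it partitions $G$ into bands $W_1,W_2,\dots$ of depth $\gamma$, forms for each index $i$ the zone $S_i = W_{i-1}\cup W_i\cup W_{i+1}$, runs {\sf Depth-Cover}$(S_i,3\gamma-1)$ on each zone, and sets $Z$ to be the union of all clusters so produced. Since the bands are defined by disjoint depth-intervals, a node lies in exactly one band $W_j$, hence in at most the three zones $S_{j-1},S_j,S_{j+1}$; as {\sf Depth-Cover} has degree at most $6$, every node lies in at most $3\cdot 6 = 18$ clusters of $Z$, so $\beta\le 18$. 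For the stretch, each cluster produced by {\sf Depth-Cover}$(S_i,3\gamma-1)$ has radius at most $8(3\gamma-1) < 24\gamma$, whence $\radius(Z) < 24\gamma$ and $\sigma\le 24$. Both facts are exactly the guarantees established for {\sf Planar-Cover} in \cite{1281112}, so here I would just invoke them.

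\emph{Coloring.} For each zone $S_i$, Lemma~\ref{lemma:color-depth} applied to {\sf Depth-Cover}$(S_i,3\gamma-1)$ gives a valid distance-$(3\gamma-1)$ coloring of the clusters of that zone with $6$ colors. I would next record the small fact that, for the clusters produced here, two clusters that are distance-$\gamma$ close are also distance-$(3\gamma-1)$ close (using that each such cluster contains the full $4(3\gamma-1)$-neighborhood of a shortest path, so a $(3\gamma-1)$-satisfied witness sits near any $\gamma$-satisfied witness); consequently the above is in particular a valid distance-$\gamma$ coloring of the clusters of $S_i$. Then I would group the zones by the residue $i\bmod 3$: if $i\equiv i'\pmod 3$ and $i\neq i'$ then $|i-i'|\ge 3$, so $\{i-1,i,i+1\}\cap\{i'-1,i',i'+1\}=\emptyset$ and hence $S_i\cap S_{i'}=\emptyset$. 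Therefore, whenever $u$ is $\gamma$-satisfied in $G$ by a cluster of $S_i$ and $v$ is $\gamma$-satisfied in $G$ by a cluster of $S_{i'}$, the neighborhoods $N_\gamma(u)\subseteq S_i$ and $N_\gamma(v)\subseteq S_{i'}$ are disjoint, so $\dist(u,v)>\gamma$ and no cluster of $S_i$ is distance-$\gamma$ close to any cluster of $S_{i'}$. Hence one $6$-color palette may be reused across all zones in a single residue class, and using a distinct $6$-color palette for each of the three classes $i\bmod 3\in\{0,1,2\}$ yields a valid distance-$\gamma$ coloring of all of $Z$ with $3\cdot 6 = 18$ colors. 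Since each per-zone coloring is computed in polynomial time via Lemma~\ref{lemma:color-depth}, so is the whole coloring.

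\emph{Main obstacle.} The degree/stretch accounting and the palette reuse are routine; the delicate point is the scale interplay. I would spend most of the write-up (i) justifying that a per-zone valid distance-$(3\gamma-1)$ coloring is genuinely valid as a distance-$\gamma$ coloring — i.e.\ nailing down the cluster-distance monotonicity claimed above for the clusters at hand — and (ii) making precise the claim that a node $\gamma$-satisfied (with respect to $G$) by a cluster of $S_i$ has its entire $\gamma$-neighborhood inside $S_i$, which is what powers the disjointness argument and confines all distance-$\gamma$ conflicts to pairs of zones lying in different residue classes.
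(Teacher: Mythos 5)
Your proposal is correct and follows essentially the same route as the paper's own proof: degree $\beta\le 3\cdot 6=18$ and stretch $\sigma\le 3\cdot 8=24$ are read off from the zone/band structure of {\sf Planar-Cover} in \cite{1281112}, the per-zone $6$-coloring comes from Lemma~\ref{lemma:color-depth} applied to {\sf Depth-Cover}$(S_i,3\gamma-1)$, and the three palettes are assigned by residue of the zone index mod~$3$ using the disjointness of $S_i$ and $S_{i+3}$. The one place your sketch is slightly off is the ``monotonicity'' step: a valid distance-$(3\gamma-1)$ coloring is not a valid distance-$\gamma$ coloring by a witness-proximity argument of the kind you outline (the $(3\gamma-1)$-satisfied witnesses can be almost $4(3\gamma-1)$ away from the $\gamma$-satisfied ones); rather, the reason it works here is that the proofs of Lemmas~\ref{lemma:color-path}--\ref{lemma:color-depth} actually show $\dist(X,Y)>\gamma'$ for every $\gamma'$ up to the clustering parameter, because a close $\gamma'$-satisfied pair would already force a node into both clusters and hence a too-short connection between the underlying shortest-path segments. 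The paper glosses over this point just as your sketch does, so it is a minor issue, but the justification you propose would not close the gap as written.
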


\section{Algorithm}
\label{sec:Algorithm}

We describe how to find the shortest path between a given pair of nodes in a graph $G = (V, E, \ww)$.
To find such paths, we use Algorithm {\sf FindPaths}
(Algorithm \ref{algo:FindPaths})
which relies on Algorithm {\sf AuxiliaryPaths}
(Algorithm \ref{algo:AuxiliaryPaths}).

%%%%%%%%%%%%%%%% Algorithm %%%%%%%%%%%%%%%%%%%

\begin{algorithm}[t]
\KwIn{Graph $G = (V,E,\ww)$}
\KwOut{Set of auxiliary paths for all nodes in $G$}
\BlankLine

\tcp{$Z_i$ is a $\gamma_i$-cover of $G$, $0 \leq i \leq \kappa$
($\gamma_i$, $\kappa$ specified in Section \ref{section:Analysis})}
\tcp{Assume each cluster has a designated leader node}
\BlankLine

$\mathcal{Q} \gets \emptyset$; ~~~\tcp{set of auxiliary paths for all nodes in $G$}

\ForEach{$v \in V$}{
    $q(v) \gets \emptyset$; ~~~\tcp{auxiliary path for node $v$ from level $0$ to $\kappa$}
	$x \gets v$\;
	\For {$i = 0$ \emph{\KwTo}$\kappa-1$}{
		Let $X \in Z_{i+1}$ be a cluster that $\gamma_{i+1}$-satisfies $v$\;
		$\ell_{i+1}(v) \gets$ leader of $X$\;
		$q_i(v) \gets$ shortest path from $x$ to $\ell_{i+1}(v)$\;
		$q(v) \gets$ concatenate $q(v)$ and $q_i(v)$\;
		$x \gets \ell_{i+1}(v)$\;
	}
	$\mathcal{Q} \gets \mathcal{Q} \cup q(v)$\;
}

\Return $\mathcal{Q}$\;
\caption{{\sf AuxiliaryPaths}$(G)$}
\label{algo:AuxiliaryPaths}
\end{algorithm}

%%%%%%%%%%%%%%%% FindPaths %%%%%%%%%%%%%%%%%%%

\begin{algorithm}
\KwIn{Graph $G = (V,E,\ww)$.}
\KwOut{Set of paths between all pair of nodes in $G$}
\BlankLine

\tcp{Let covers $Z_i$ be as in Algorithm {\sf AuxiliaryPaths}}
\tcp{Let leader $\ell_i(x)$ be as defined in Algorithm {\sf AuxiliaryPaths}; $\ell_0(x) \gets x$}
\tcp{Let auxiliary path $q(x)$ be as computed in Algorithm {\sf AuxiliaryPaths}}
\tcp{Let ${\overline q}_i(x)$ be the auxiliary path segment from $x$ to $\ell_i(x)$; ${\overline q}_{0}(x) \gets x$}
\BlankLine

$\mathcal{P} \gets \emptyset$; ~~~\tcp{set of paths for all pairs of nodes in $G$}
\ForEach{pair $u,v \in V$} {
    Let $\gamma_i$ be the smallest locality parameter such that $\gamma_i \geq 2\dist(u,v)$\;
	Let $X \in Z_{i}$ be a cluster that $\gamma_{i}$-satisfies $u$ (and hence $\gamma_i/2$-satisfies $v$)\;
    Let $\ell'$ be the leader of $X$ ({\em common leader} of $u,v$)\;
	$q' \gets$ concatenate shortest paths from $\ell_{i-1}(u)$ to $\ell'$ to $\ell_{i-1}(v)$\;
	$p(u,v) \gets$	concatenate ${\overline q}_{i-1}(u)$, $q'$, and ${\overline q}_{i-1}(v)$\;
	$\mathcal{P} \gets \mathcal{P} \cup p(u,v)$\;
}
		
\Return $\mathcal{P}$\;

\caption{{\sf FindPaths}$(G)$}
\label{algo:FindPaths}
\end{algorithm}

%%%%%%%%%%%%%%%%%%%%%%%%%%%%%%%%%%%%%%%%%%%%%%%%%%%%%%%%%%%%%%

Both algorithms use $\kappa+1$ covers $Z_0, \ldots, Z_{\kappa}$,
where in $Z_0$ every node in $V$ is a cluster,
and $Z_i$ is a $\gamma_i$-cover of $G$, for $i \geq 1$,
where the parameters $\gamma_i$ and $\kappa$ are defined in Section \ref{section:Analysis}.
We refer to the cover $Z_i$ as the level $i$ cover of $G$.
We assume that each cluster in the covers has a designated leader node.
There is a unique cluster, containing all nodes in $G$,
and leader node $\ell_{\kappa}$ at level $\kappa$.

Algorithm {\sf AuxiliaryPaths} computes a {\em auxiliary path} $q(v)$
from every node $v \in V$ to $\ell_{\kappa}$.
The auxiliary paths are built in a bottom-up fashion.
A auxiliary path from any node $v \in V$ at level $0$ is built recursively.
In the basis of the recursion,
we identify a cluster $X_1 \in Z_1$,
which $\gamma_1$-satisfies node $v$.
Let $\ell_1(v)$ denote the leader $X_1$.
We now compute a shortest path,
denoted $q_0(v)$, from $v$ to $\ell_1(v)$.
This forms the first path segment of $q(v)$.
Suppose we have computed $q(v)$ up to level $i$, $i < \kappa$.
We now want to extend this path to the next higher level $i+1$.
To compute the path segment from level $i$ to level $i+1$,
we repeat the process of finding a cluster $X_{i+1} \in Z_{i+1}$
that $\gamma_{i+1}$-satisfies node $v$.
Let $\ell_{i+1}(v)$ denote the leader $X_1$.
We compute the shortest path,
denoted $q_i(v)$ from $\ell_i(v)$
to $\ell_{i+1}(v)$.
We then append this new path segment $q_i(v)$ to $q(v)$
to form the current extended path $q(v)$.
The path building process terminates when the last leader reaches level $\kappa$.

We are now ready to describe how Algorithm {\sf FindPath}
computes the shortest paths between all pair of nodes in $G$.
For pair of nodes $u,v \in V$,
let $y$ be the distance between them.
Let $\gamma_i$ be the smallest locality parameter
such that $\gamma_i \geq y/2$.
Let $X \in Z_i$ be the cluster that $\gamma_i$ satisfies $u$,
and let $\ell'$ be the respective leader of $X$.
Note that by the way that we have chosen $\gamma_i$,
cluster $X$ also $\gamma_i/2$-satisfies $v$.
Let ${\overline q}_i(u)$ denote the segment of the auxiliary path $q(u)$
from $u$ to $\ell_i(u)$.
We concatenate ${\overline q}_{i-1}(u)$,
with a shortest path from $\ell_{i-1}(u)$ to $\ell'$,
with a shortest path from $\ell'$ to $\ell_{i-1}(v)$,
and ${\overline q}_{i-1}(v)$.
This gives the path $p(u,v)$.

\section{Analysis}
\label{section:Analysis}
Let $G = (V,E,\ww)$ be a planar graph with $n$ nodes.
In this section we use the following parameters:
\begin{center}
\begin{tabular}{ll}
$\kappa  =  1 + \lceil \log_{4\sigma} D \rceil$ & //highest cluster level in $G$\\
$\beta   =  18$ & //cover degree bound\\
$\sigma  =  24$ & //cover stretch bound \\
$\gamma_i  =  (4\sigma)^{i-1}$ & //locality parameter of level $i \geq 1$ cover \\
$\chi  =   18$ & //coloring of level $i$
\end{tabular}
\end{center}
Consider $\kappa + 1$ levels of covers $Z_0, \ldots, Z_{\kappa+1}$,
where in $Z_0$ each node in $V$ is a cluster, and each $Z_i$, $i \geq 1$,
is a $\gamma_i$-cover of $G$
which is obtained from Theorem \ref{theorem:color-main}.
Thus, each $Z_i$, $i\geq 1$, has degree at most $\beta$, stretch at most $\sigma$,
and can be given a valid distance-$\gamma_i$ coloring with $\chi$ colors.

Let $A$ denote an arbitrary set of demands.
For any demand $d = (s,t) \in A$
let $p(d) = p(s,t)$ be the path given by Algorithm {\sf FindPaths}.
Suppose that the common leader of $s$ and $t$ is $\ell$.
The path $p(d)$ consists of two path segments:
the {\em source path segment $p(s)$}, from $s$ to $\ell$,
and the {\em destination path segment $p(t)$} from $\ell$ to $t$.
We denote by $p_i(s)$ the subpath between level $i$ and level $i+1$
(we call this the level $i$ subpath).

Let $C^*(A)$ denote the cost of optimal paths in $A$.
Let $C(A)$ denote the cost of the paths given by our algorithm.
We will bound the competitive ratio $C(A)/C^*(A)$.
For simplicity, in the approximation analysis,
we consider only the cost of the source path segments $p(s_i)$.
When we consider the destination segments the
approximation ratio increases by a factor of 2.

The cost $C(A)$ can be bounded as a summation
of costs from the different levels as follows.
For any edge $e$ let
$\varphi_{e,i}(A) = \{ p_i(s) : ((s,t) \in A) \wedge (e \in p_i(v)) \}$
be the set of layer-$i$ subpaths that use edge $e$.
Denote by $C_{e,i}(A) = f(|\varphi_{e,i}(A)|) \cdot w_e$
the cost on the edge $e$ incurred by the level-$i$ subpaths.
Since $f$ is subadditive,
we get $C_e(A) \leq \sum_{i=0}^{\kappa-1} C_{e,i}(A)$.
Let $C_i(A) = \sum_{e \in E} C_{e,i}(A)$ denote the
cost incurred by the layer-$i$ subpaths.
Since $C(A) = \sum_{e \in E} C_e(A)$,
we have that:
\begin{equation}
\label{eqn:cost-levels}
C(A) \leq  \sum_{i=0}^{\kappa-1} C_i(A).
\end{equation}

For any cluster $X$ let $X(A)$
denote the set of demands with source in $X$
whose paths leave from the leader of $X$
toward the leader of a higher level cluster.

\begin{lemma}
\label{lemma:lower-bound-big}
For any $Z_i$, $2 \leq i \leq \kappa-1$,
$C^*(A) \geq R(i) / \chi$,
where $R(i) = \sum_{X \in Z_i} f(|X(A)|) \cdot \gamma_{i}/2$.
\end{lemma}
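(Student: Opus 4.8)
The plan is to bound $C^*(A)$ from below by exhibiting, for each cluster $X \in Z_i$, a portion of the optimal routing that must cross a ``separating'' region of weighted width roughly $\gamma_i/2$ carrying all $|X(A)|$ source demands of $X$, and then to account for overlaps among clusters using the distance-$\gamma_i$ coloring guaranteed by Theorem~\ref{theorem:color-main}. First I would fix a color class $\mathcal{C} \subseteq Z_i$; by the coloring property, any two distinct clusters $X, X' \in \mathcal{C}$ satisfy $\dist(X, X') > \gamma_i$, which in particular means the $\gamma_i/2$-neighborhoods of the points that are $\gamma_i$-satisfied inside these clusters are pairwise disjoint (a routine triangle-inequality check: a node in both neighborhoods would witness $\dist(X,X') \le \gamma_i$). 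This disjointness is what lets me add contributions across clusters in the same color class without double-counting edge costs.

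Next, for a single cluster $X \in Z_i$, consider the demands in $X(A)$: these are demands $(s,t)$ whose source $s$ lies in $X$ and whose optimal... actually whose \emph{algorithm} path exits $X$ through its leader toward a higher level. The key geometric claim is that in the \emph{optimal} solution, the path of each such demand $(s,t)$ must travel weighted distance at least $\gamma_i/2$ while still ``near'' $X$. The reason: by the definition of $X(A)$ and the way {\sf FindPaths} chooses the common-leader level, the destination $t$ is far enough from $s$ that $s$ is $\gamma_i$-satisfied by $X$ but the pair is not handled at level $i-1$; hence $\dist(s,t) \ge \gamma_{i-1}/2 = \gamma_i/(8\sigma)$ — wait, I would instead argue more directly that $s$ is $\gamma_i$-satisfied, so $N_{\gamma_i}(s) \subseteq X$, and since the demand leaves the cluster $X$ in the algorithm it is because its partner lies outside $N_{\gamma_i/2}(s)$; so \emph{any} path from $s$ to $t$, optimal or not, must contain a subpath from $s$ to the boundary sphere at radius $\gamma_i/2$ around $s$, of weighted length $\ge \gamma_i/2$, lying entirely inside $N_{\gamma_i}(s) \subseteq X$. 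Summing the fusion cost over this ``cone'' around $s$: because $f$ is concave and subadditive, the cheapest way for the optimal solution to route all $|X(A)|$ demands out of the ball $N_{\gamma_i/2}(s)$ — even pooling them maximally — costs at least $f(|X(A)|) \cdot \gamma_i/2$; here I use that along a single shortest radial segment the aggregated flow is at most $|X(A)|$, and $f$ nondecreasing, so the contribution over that weighted length $\gamma_i/2$ is at least... hmm, this needs the standard lower-bound argument: the optimal cost restricted to edges inside $X$ is at least $f(|X(A)|)\cdot(\gamma_i/2)$ because every one of those demands must individually be routed distance $\gamma_i/2$ and concavity only helps the optimum by at most the factor we are already conceding.

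Let me restate the core estimate cleanly: for each $X\in Z_i$, the optimal cost \emph{localized to edges of $X$} is at least $f(|X(A)|)\cdot\gamma_i/2$. Then I would sum this over all $X$ in a single color class $\mathcal{C}$; by the disjointness established above, the edge sets $E(X)$ used in these localized bounds are disjoint across $\mathcal{C}$, so
\[
C^*(A) \;\ge\; \sum_{X \in \mathcal{C}} f(|X(A)|)\cdot \gamma_i/2 .
\]
Finally, averaging over the $\chi$ color classes (the best class has weight at least a $1/\chi$ fraction of the total $R(i)$), and using $\chi = 18$ from Theorem~\ref{theorem:color-main}, yields $C^*(A) \ge R(i)/\chi$ as claimed. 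The main obstacle I anticipate is the single-cluster estimate: making rigorous that concavity/subadditivity of the unknown $f$ cannot let the optimum route $|X(A)|$ demands out of a radius-$\gamma_i/2$ ball for less than $f(|X(A)|)\cdot\gamma_i/2$ — this needs a careful argument that the aggregated multiplicity on any single edge of a shortest outward path is bounded and that $f$ being nondecreasing with $f(0)=0$ forces the stated cost, rather than the naive $|X(A)|\cdot f(1)\cdot\gamma_i/2$ which is the wrong (too weak, or in the wrong direction) bound. I would handle this by choosing, for each demand, the specific ball around its own source and noting these balls all contain a common witness region, or alternatively by a direct flow/cut argument: the set of edges at ``radial distance'' exactly $\gamma_i/2$ from the leader-region forms a cut separating all $|X(A)|$ sources from their destinations, of weighted capacity summing appropriately, and $f$ applied to the flow across this cut is at least $f(|X(A)|)$ times the cut's weighted width $\ge \gamma_i/2$.
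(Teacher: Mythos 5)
Your proposal is correct and takes essentially the same route as the paper: restrict to one color class of $Z_i$, lower-bound the optimum's cost localized to each cluster by $f(|X(A)|)\cdot\gamma_i/2$ using that every source $s$ of a demand in $X(A)$ satisfies $\dist(s,t)>\gamma_i/2$ and $N_{\gamma_i}(s)\subseteq X$, invoke the distance-$\gamma_i$ coloring to make these localized contributions edge-disjoint across same-colored clusters, and average over the $\chi$ classes. The single-cluster step you flagged is the standard concavity estimate --- $f$ concave with $f(0)=0$ gives $f(x)\ge x\,f(m)/m$ for $x\le m=|X(A)|$, so summing $f(|\varphi_e|)w_e\ge (f(m)/m)\,|\varphi_e|\,w_e$ over the length-$\gamma_i/2$ prefix edges of all $m$ optimal paths yields $f(m)\cdot\gamma_i/2$ --- and the paper's own proof is in fact no more explicit there than yours.
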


\begin{proof}
Let $Z_i(k)$ to be the set of clusters at level $i$
which receive color $k \in [1,\chi]$.
Consider a cluster $X \in Z_i(k)$.
Consider a demand $(s,t) \in X(A)$.
Since $X\in Z_i(k)$ the common leader of $s$ and $t$ is at a level $i+1$ or higher.
From the algorithm, $\dist(s,t) \geq \gamma_{i+1}/2$.
Consider the subpaths from $X(A)$ of length up to $\gamma_{i}/2$.
In the best case,
these subpaths from $X(A)$ may be combined
to produce a path with smallest possible total cost $f(|X(A)|) \cdot \gamma_{i}/2$.
Any two nodes $u \in X(A)$ and $v \in X(Y)$,
where $X, Y \in Z_{i}(k)$ and $X \neq Y$,
have $\dist(u,v) > \gamma_i$,
since each node is $\gamma_i$-satisfied in its respective cluster
and $X$ and $Y$ receive the same color in the
distance-$\gamma_i$ coloring of $Z$.
Therefore, the subpaths of lengths up to $\gamma_{i}/2$
from the demands $X(A)$ and $Y(A)$ cannot combine.
Consequently,
$C^*(A) \geq R(i,k)$
where
$R(i,k) = \sum_{X \in Z_i(k)} f(|X(A)|) \cdot \gamma_{i}/2$.
Let $R_{\max} = \max_{k \in [1, \chi]} R(i,k)$.
We have that $C^*(A) \geq R_{\max}$.
Since $R(i) = \sum_{k = 1}^{\chi} R(i,k) \leq R_{\max} \cdot \chi$.
We obtain $C^*(A) \geq R(i) / \chi$,
as needed.
\end{proof}

We also get the following trivial lower bound for the special
case where $0 \leq i \leq 1$,
which follows directly from
the observation that each demand needs to form a path with length at least 1.

\begin{lemma}
\label{lemma:lower-bound-small}
For any $Z_i$, $0 \leq i \leq 1$,
$C^*(A) \geq \sum_{X \in Z_i} f(|X(A)|)$.
\end{lemma}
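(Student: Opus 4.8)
The plan is to argue directly, with no coloring: every demand that is counted in some $X(A)$ is forced to pay for at least one unit of edge weight right at its source, and subadditivity of $f$ lets us collapse these forced costs cluster by cluster into the sum $\sum_{X\in Z_i} f(|X(A)|)$. First I would pin down the combinatorial meaning of the sets $X(A)$ for $i\in\{0,1\}$. For $i=0$ each cluster is a single node $X=\{v\}$ with leader $v$, so $X(A)$ is exactly the set of demands of $A$ with source $v$ (all of them, since for $s\ne t$ the path necessarily rises above level $0$). For $i=1$, since $\gamma_1=1$, any demand $(s,t)$ with $s\ne t$ has $\dist(s,t)\ge 1>\gamma_1/2$, so by {\sf FindPaths} its common leader sits at level $\ge 2$; hence $X(A)$ for $X\in Z_1$ is exactly the set of demands of $A$ whose source is assigned the level-$1$ cluster $X$ by {\sf AuxiliaryPaths}. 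In both cases the families $\{X(A)\}_{X\in Z_i}$ are pairwise disjoint and together cover all demands with distinct endpoints. I would also record the elementary fact behind the whole lemma: since $\ww$ is integer-valued and $s\ne t$, the optimal path $p^*(s,t)$ has weight at least $1$ and therefore uses at least one edge incident to $s$.

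Next, in the fixed optimal routing $P^*(A)$, for each demand $d=(s,t)$ let $e_d$ be the first edge of $p^*(d)$ (incident to $s$), and for $X\in Z_i$ and an edge $e$ set $n_{X,e}=|\{d\in X(A):e_d=e\}|$, so that $\sum_{e} n_{X,e}=|X(A)|$. Since those $n_{X,e}$ demands all traverse $e$, the optimal load on $e$ is at least $n_{X,e}$, hence $f(|\varphi^*_e|)\ge f(n_{X,e})$ by monotonicity; using $w_e\ge 1$ and subadditivity of $f$ then gives, for each fixed $X$,
\[
 C^*(A)=\sum_{e\in E} f(|\varphi^*_e|)\,w_e \;\ge\; \sum_{e\in E} f(n_{X,e}) \;\ge\; f\!\Big(\sum_{e\in E} n_{X,e}\Big) \;=\; f(|X(A)|).
\]
To upgrade this to the claimed sum I would aggregate over clusters at the level of edges rather than re-using the whole-graph cost once per cluster. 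For $i=0$ the only clusters that can charge a given edge $e=(u,v)$ are $\{u\}$ and $\{v\}$, and from each the charged demands are distinct, so $\sum_{X\in Z_0} n_{X,e}\le|\varphi^*_e|$; comparing $\sum_X f(n_{X,e})$ against $f(|\varphi^*_e|)w_e$ edge by edge and summing recovers $\sum_{X\in Z_0}f(|X(A)|)\le C^*(A)$. For $i=1$ one additionally uses $\gamma_1$-satisfaction, which keeps $e_d$ inside the cluster assigned to $s$, together with the bounded overlap $\beta\le 18$ of $Z_1$, to control how many clusters can charge one edge.

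The step I expect to be the main obstacle is exactly this last aggregation. A naive union of the per-cluster inequalities only yields $C^*(A)\ge\max_{X} f(|X(A)|)$, and charging each demand to the first edge of its optimal path lets the two endpoints of an edge compete for it, which a careless count inflates into a spurious constant factor (subadditivity of $f$ goes the wrong way when one tries to split a single $f(|\varphi^*_e|)$ back into $f(n_{\{u\},e})+f(n_{\{v\},e})$). Removing this factor is where one must spend care: for $i=0$ one fixes an orientation of $E$ so that each edge is charged from the source side by a single cluster, and for $i=1$ one leans on the locality of $e_d$ and the $O(1)$ degree of $Z_1$ in the same spirit as the accounting of Lemma~\ref{lemma:lower-bound-big}; everything else is the two lines of subadditivity displayed above. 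If getting the constant exactly $1$ proves awkward, it is worth noting that only an $O(\log n)$ approximation is ultimately sought, so a bounded overcount at this level is harmless downstream.
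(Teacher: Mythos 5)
Your core observation is exactly the paper's: the paper's entire ``proof'' of this lemma is the single sentence that ``each demand needs to form a path with length at least 1,'' and your per-cluster computation
$C^*(A)=\sum_e f(|\varphi^*_e|)w_e \ge \sum_e f(n_{X,e}) \ge f\bigl(\sum_e n_{X,e}\bigr)=f(|X(A)|)$
is the honest expansion of that sentence, using $w_e\ge 1$, monotonicity, and subadditivity. You also correctly identified what the paper silently skips over: a naive union of these per-cluster bounds only yields a $\max$, and pushing through to the claimed $\sum_{X\in Z_i}$ requires an edge-by-edge charging argument that subadditivity alone does not give, because $f(n_{\{u\},e})+f(n_{\{v\},e})$ can genuinely exceed $f(|\varphi^*_e|)\,w_e$ (take $f=\sqrt{\cdot}$, $w_e=1$, one demand from each endpoint through $e$). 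That is a real subtlety the paper does not address.

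The one place your sketch does not close is the proposed fix. ``Fix an orientation of $E$ so that each edge is charged from the source side by a single cluster'' does not work: the first edge $e_d$ of the optimal path of a demand is dictated by the optimal routing $P^*(A)$, not by you, and a single edge $e=(u,v)$ can be the first edge for demands sourced at $u$ \emph{and} for demands sourced at $v$; an orientation cannot suppress one of those charges without losing demands. The clean outcome of the first-edge (or, equivalently, cut-around-the-source) charging is $\sum_{X\in Z_0} f(|X(A)|)\le 2\,C^*(A)$, because each undirected edge is shared by at most its two endpoints. Your closing remark is the right one: this constant is harmless, since the lemma is only consumed in Lemma~\ref{lemma:second-upper-bound} to bound levels $i\in\{0,1\}$ by $O(1)\cdot C^*(A)$, and any fixed constant here (including the $\beta$-overlap constant you would invoke for $i=1$) is absorbed into the final $O(\log n)$. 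So your argument is as rigorous as the paper's one-liner, is aligned with it in spirit, and correctly flags the aggregation issue --- just drop the orientation claim and state the factor-$2$ (for $i=0$) or factor-$O(\beta)$ (for $i=1$) loss explicitly.
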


We obtain the following upper bound.

\begin{lemma}
\label{lemma:upper-bound}
For any $Z_i$, $0 \leq i \leq \kappa-1$,
$C_i(A) \leq Q(i)$
where $Q(i) = \sum_{X \in Z_i} f(|X(A)|) \cdot \beta \sigma \gamma_{i+1}$.
\end{lemma}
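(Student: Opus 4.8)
The plan is to charge the level‑$i$ cost $C_i(A)$ to the clusters of $Z_i$ one at a time, exploiting that all level‑$i$ subpaths belonging to a given cluster emanate from a single node and fan out to only a constant number of destinations. Fix $X\in Z_i$ with leader $\ell$, and let $\Phi_i(X)$ be the set of level‑$i$ subpaths $p_i(s)$ whose starting leader is $\ell$. By construction of the paths, the level‑$i$ subpath of a demand $(s,t)$ starts at $\ell$ exactly when its source lies in $X$ and its source path passes upward through $\ell$ at level $i$, i.e.\ exactly for the demands in $X(A)$; in particular every level‑$i$ subpath lies in exactly one $\Phi_i(X)$, and $\varphi_{e,i}^X(A):=\{\,p_i(s)\in\Phi_i(X):e\in p_i(s)\,\}$ satisfies $|\varphi_{e,i}^X(A)|\le|X(A)|$. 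Since the sets $\varphi_{e,i}^X(A)$ partition $\varphi_{e,i}(A)$, subadditivity of $f$ gives $f(|\varphi_{e,i}(A)|)\le\sum_{X\in Z_i}f(|\varphi_{e,i}^X(A)|)$; multiplying by $w_e$ and summing over $e$ yields
\[
C_i(A)\ \le\ \sum_{X\in Z_i}\ \sum_{e\in E} f\!\left(|\varphi_{e,i}^X(A)|\right) w_e .
\]

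The crux — and the main obstacle — will be the claim that, for a fixed $X$, the edges carrying the subpaths of $\Phi_i(X)$ lie on at most $\beta$ shortest paths, each of weight at most $\sigma\gamma_{i+1}$. To see this, note first that every source $s$ with $\ell_i(s)=\ell$ lies in $X$, so for $i\ge 1$, $\dist(s,\ell)\le 2\,\radius(X)\le 2\sigma\gamma_i\le\gamma_{i+1}$, using the stretch bound of Theorem \ref{theorem:color-main} and the identity $\gamma_{i+1}=4\sigma\gamma_i$ (for $i=0$ this is trivial since then $X=\{s\}$ and $\ell=s$). Hence $\ell\in N_{\gamma_{i+1}}(s)$, so every cluster of $Z_{i+1}$ that $\gamma_{i+1}$‑satisfies such an $s$ must contain $\ell$. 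By the degree bound $\beta$ of Theorem \ref{theorem:color-main}, at most $\beta$ clusters of $Z_{i+1}$ contain $\ell$; consequently the level‑$(i+1)$ endpoint of each subpath in $\Phi_i(X)$ — whether it is a leader $\ell_{i+1}(s)$ picked by {\sf AuxiliaryPaths} or a common leader $\ell'$ picked by {\sf FindPaths} — is one of at most $\beta$ nodes $m_1,\dots,m_\beta$, each the leader of a cluster of $Z_{i+1}$ containing $\ell$. Thus, using (as the algorithm does) a fixed shortest path for each ordered pair, every subpath in $\Phi_i(X)$ equals the shortest path $P_{X,m_j}$ from $\ell$ to some $m_j$; and since $m_j$ is the (central) leader of a cluster of $Z_{i+1}$ containing $\ell$, its weight is $w(P_{X,m_j})=\dist(\ell,m_j)\le\radius(Z_{i+1})\le\sigma\gamma_{i+1}$.

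With the claim in hand the rest is bookkeeping. Fix $X$: for every edge $e$ we have $f(|\varphi_{e,i}^X(A)|)\le f(|X(A)|)$ because $f$ is non‑decreasing and $|\varphi_{e,i}^X(A)|\le|X(A)|$, and $\varphi_{e,i}^X(A)=\emptyset$ unless $e\in\bigcup_{j}P_{X,m_j}$; therefore
\[
\sum_{e\in E} f\!\left(|\varphi_{e,i}^X(A)|\right) w_e\ \le\ f(|X(A)|)\!\!\sum_{e\in\bigcup_j P_{X,m_j}}\!\! w_e\ \le\ f(|X(A)|)\sum_{j=1}^{\beta} w(P_{X,m_j})\ \le\ \beta\sigma\gamma_{i+1}\, f(|X(A)|).
\]
Summing over $X\in Z_i$ gives $C_i(A)\le\beta\sigma\gamma_{i+1}\sum_{X\in Z_i}f(|X(A)|)=Q(i)$, as claimed. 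The only delicate point is the claim of the second paragraph, where the constant degree of the sparse cover (Theorem \ref{theorem:color-main}) and the geometric ratio $\gamma_{i+1}/\gamma_i=4\sigma$ between successive locality parameters are essential; everywhere else we merely use that $f$ is non‑decreasing and subadditive and that distinct level‑$i$ subpaths belong to distinct $\Phi_i(X)$'s.
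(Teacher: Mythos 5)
Your proof is correct and takes essentially the same route as the paper's: both charge the level-$i$ cost cluster-by-cluster, bound by $\beta$ the number of distinct level-$(i+1)$ endpoints reachable from any cluster's leader (you anchor the degree argument on the common leader $\ell$, the paper anchors it on a second source $s_2$, but it is the same observation), and use the $\sigma\gamma_{i+1}$ radius bound plus monotonicity/subadditivity of $f$ to get $\beta\sigma\gamma_{i+1}\,f(|X(A)|)$ per cluster.
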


\begin{proof}
For any cluster $X \in Z_i$,
we can partition the demands $X(A) = Y_1 \cup Y_2 \cup \ldots \cup Y_k$,
where $Y_i \neq Y_j$, $i \neq j$,
according to leader at level $i+1$ that they use,
so that all demands in $Y_i$ use the same leader in $Z_{i+1}$,
and $Y_i$ and $Y_j$ use a different leader of $i+1$.
Next, we provide a bound on $k$.

Consider any two demands $d_1 = (s_1, t_1) \in X(A)$ and $d_2 = (s_2, t_2) \in X(A)$.
Let $\ell_i$ be the common leader at level $i$.
Since $s_1$ and $s_2$ are $\gamma_i$-satisfied
by the cluster of $\ell_i$,
they are both members of that cluster.
Therefore, $\dist(s_1, \ell_i) \leq \sigma \gamma_i$,
and $\dist(s_2, ell_i) \leq \sigma \gamma_i$.
Thus, $\dist(s_1, s_2) \leq 2\sigma \gamma_i = \gamma_{i+1}/2$.
Suppose that demand $d_1$ chooses leader $\ell_{i+1}$ at level $i+1$ with respective cluster $X_{i+1}$.
Since $s_1$ is at least $\gamma_{i+1}/2$-satisfied in $X_{i+1}$,
$s_2$ is a member of $X_{i+1}$.
Since any node is a member of at most $\beta$ clusters at level $i+1$,
it has to be that the number of different level $i+1$ leaders
at level $i+1$ that the demands in $X(A)$ select is bounded by $\beta$.
Consequently, $k \leq \beta$.

Since $f$ is subadditive and for any demand $(s,t)$
$|p_i(s)| \leq \sigma \gamma_{i+1}$,
$C_i(Y_j) \leq f(|Y_j|) \cdot \sigma \gamma_{i+1}$.
Therefore,
$C_i(X(A)) \leq \sum_{j = 1}^{k} C_i(Y_j) \leq f(|X(A)|) \cdot \beta \sigma \gamma_{i+1}$.
Which gives:
$C_i(A) \leq \sum_{X \in Z_i} f(|X(A)|) \cdot \beta \sigma \gamma_{i+1}$,
as needed.
\end{proof}

\begin{lemma}
\label{lemma:second-upper-bound}
For any $0 \leq i \leq \kappa-1$,
$C_i(A) \leq  C^*(A) \cdot 8 \beta \sigma^2 \chi$.
\end{lemma}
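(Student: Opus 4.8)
The plan is to combine the single upper bound on $C_i(A)$ from Lemma~\ref{lemma:upper-bound} with the two regimes of lower bounds on $C^*(A)$ (Lemma~\ref{lemma:lower-bound-big} and Lemma~\ref{lemma:lower-bound-small}), doing a short case analysis on $i$. The key algebraic fact I will use is that the locality parameters grow geometrically: for $i \geq 1$ we have $\gamma_{i+1}/\gamma_i = (4\sigma)^i/(4\sigma)^{i-1} = 4\sigma$, and $\gamma_1 = (4\sigma)^0 = 1$, $\gamma_2 = 4\sigma$.

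First I would recall from Lemma~\ref{lemma:upper-bound} that $C_i(A) \leq Q(i) = \sum_{X \in Z_i} f(|X(A)|)\cdot \beta\sigma\gamma_{i+1}$, so it suffices to bound $Q(i)$. For the main regime $2 \leq i \leq \kappa-1$, I substitute $\gamma_{i+1} = 4\sigma\gamma_i$ to get $Q(i) = 4\beta\sigma^2\gamma_i \sum_{X}f(|X(A)|) = 8\beta\sigma^2 \cdot R(i)$, where $R(i) = \sum_{X\in Z_i} f(|X(A)|)\cdot \gamma_i/2$ is exactly the quantity appearing in Lemma~\ref{lemma:lower-bound-big}. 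That lemma gives $R(i) \leq \chi\, C^*(A)$, and hence $C_i(A) \leq Q(i) \leq 8\beta\sigma^2\chi\, C^*(A)$, which is the claimed bound.

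For the remaining small cases $0 \leq i \leq 1$, Lemma~\ref{lemma:lower-bound-big} does not apply, so I instead use the trivial bound $C^*(A) \geq \sum_{X\in Z_i} f(|X(A)|)$ from Lemma~\ref{lemma:lower-bound-small}. When $i=0$ we have $\gamma_1 = 1$, so $Q(0) = \beta\sigma\sum_X f(|X(A)|) \leq \beta\sigma\, C^*(A)$; when $i=1$ we have $\gamma_2 = 4\sigma$, so $Q(1) = 4\beta\sigma^2\sum_X f(|X(A)|) \leq 4\beta\sigma^2\, C^*(A)$. Since $\chi \geq 1$ and $\sigma \geq 1$, both are at most $8\beta\sigma^2\chi\, C^*(A)$, so the same inequality holds. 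Combining the cases completes the proof.

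There is no real obstacle here beyond bookkeeping: the only points needing care are (i) making sure the index ranges of the two lower-bound lemmas exactly cover $\{0,\dots,\kappa-1\}$ with no gap, and (ii) checking that the single constant $8\beta\sigma^2\chi$ dominates the constants arising in the small cases, so that the bound is uniform in $i$.
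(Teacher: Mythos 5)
Your proof is correct and follows essentially the same route as the paper's: bound $C_i(A)$ by $Q(i)$ via Lemma~\ref{lemma:upper-bound}, relate $Q(i)$ to $R(i)$ using $\gamma_{i+1}/\gamma_i = 4\sigma$ and then invoke Lemma~\ref{lemma:lower-bound-big} for $2 \le i \le \kappa-1$, and fall back to Lemma~\ref{lemma:lower-bound-small} for $i\in\{0,1\}$. The only difference is cosmetic bookkeeping --- you separate $i=0$ and $i=1$ while the paper bounds both by the $\gamma_2$ case at once --- and you correctly avoid the paper's typographical slip of inserting an extra $\chi$ into the intermediate identity $Q(i) = R(i)\cdot 8\beta\sigma^2$.
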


\begin{proof}
From Lemma \ref{lemma:upper-bound},
for any $0 \leq i \leq \kappa-1$,
$C_i(A) \leq Q(i)$.
From Lemma \ref{lemma:lower-bound-big},
for any $2 \leq i \leq \kappa - 1$,
$C^*(A) \geq R(i) / \chi$.
Note that
$Q(i)
= R(i) \cdot 2 \beta \sigma \chi \gamma_{i+1} / \gamma_{i}
= R(i) \cdot 8 \beta \sigma^2 \chi$.
Therefore,
$C_i(A)
\leq C^*(A) \cdot 8 \beta \sigma^2 \chi$.
For $0 \leq i \leq 1$,
we use the lower bound of Lemma \ref{lemma:lower-bound-small},
and we obtain
$C_i(A) \leq  C^*(A) \cdot \beta \sigma \gamma_2 = C^*(A) \cdot 4 \beta \sigma^2$.
\end{proof}

We now give the central result of the analysis:

\begin{theorem}
\label{theorem:main}
The oblivious approximation ratio of the algorithm is $O(\log n)$.
\end{theorem}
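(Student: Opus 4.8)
The plan is to assemble the theorem directly from the level decomposition of the cost and the uniform per-level bound already established. Starting from inequality~(\ref{eqn:cost-levels}), $C(A) \le \sum_{i=0}^{\kappa-1} C_i(A)$, and applying Lemma~\ref{lemma:second-upper-bound} to each summand, $C_i(A) \le C^*(A)\cdot 8\beta\sigma^2\chi$, I would conclude that the total cost of the source path segments is at most $\kappa\cdot 8\beta\sigma^2\chi\cdot C^*(A)$. Accounting for the destination path segments multiplies this by the factor $2$ noted in the text, so $C(A) \le 16\,\kappa\,\beta\sigma^2\chi\cdot C^*(A)$ for every demand set $A$ (when $C^*(A)=0$ the algorithm's paths are trivial as well and the ratio is taken to be $1$). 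Taking the maximum over $A$ then yields $A.R.({\cal A}_{obl}) \le 16\,\kappa\,\beta\sigma^2\chi$.

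It then remains to check that this expression is $O(\log n)$. By Theorem~\ref{theorem:color-main} the quantities $\beta = 18$, $\sigma \le 24$, and $\chi \le 18$ are absolute constants, so $16\,\beta\sigma^2\chi = O(1)$. The only parameter that grows is the number of levels $\kappa = 1 + \lceil \log_{4\sigma} D \rceil = O(\log D)$, where $D$ is the diameter of $G$; under the standard assumption that edge weights (hence $D$) are polynomially bounded in $n$ we have $D = n^{O(1)}$ and therefore $\kappa = O(\log n)$. Substituting, $A.R.({\cal A}_{obl}) = O(\log n)$, which matches the $\Omega(\log n)$ lower bound inherited from online Steiner tree on planar graphs~\cite{imase} and so is tight.

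The argument above is short because all the work sits in the supporting lemmas: Lemma~\ref{lemma:lower-bound-big} (together with Lemma~\ref{lemma:lower-bound-small} for the two bottom levels), which charges the cost available at level $i$ against $R(i)/\chi \le C^*(A)$ using the constant distance-$\gamma_i$ coloring of the sparse cover, and Lemma~\ref{lemma:upper-bound}, which bounds by $Q(i)$ the level-$i$ cost our paths incur. The geometric growth $\gamma_i = (4\sigma)^{i-1}$ is exactly what makes $Q(i) = R(i)\cdot 2\beta\sigma\,\gamma_{i+1}/\gamma_i = R(i)\cdot 8\beta\sigma^2$ a constant multiple of $R(i)$, so each level is approximated to within $O(1)$ and the $\kappa$ levels combine to lose only an $O(\log n)$ factor. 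The only point that deserves an explicit remark is the passage from $O(\log D)$ to $O(\log n)$, i.e. the standard observation that only logarithmically many levels are relevant; granting that, the theorem follows immediately from~(\ref{eqn:cost-levels}) and Lemma~\ref{lemma:second-upper-bound}.
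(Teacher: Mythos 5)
Your proof is correct and follows essentially the same route as the paper's: sum the per-level bound from Lemma~\ref{lemma:second-upper-bound} over the $\kappa$ levels via Equation~(\ref{eqn:cost-levels}), double for the destination segments, and observe that the constants $\beta,\sigma,\chi$ are absolute while $\kappa=O(\log n)$. The one thing you add that the paper leaves implicit is the remark that $\kappa=O(\log D)$ and hence $\kappa=O(\log n)$ requires the diameter $D$ to be polynomially bounded in $n$; that is a fair and worthwhile clarification, not a deviation in method.
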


\begin{proof}
Since the demand set $A$ is arbitrary,
from Lemma \ref{lemma:second-upper-bound}
and Equation \ref{eqn:cost-levels}
we obtain oblivious approximation ratio bounded by
$8 \kappa \beta \sigma^2 \chi$.
When
we take into account the source path segments together
with the destination path segments,
the approximation ratio bound increases by a factor of 2,
and it becomes $16 \kappa \beta \sigma^2 \chi$.
Since, $\beta$, $\sigma$, $\chi$, are constants and $\kappa = O(\log n)$,
we obtain approximation ratio $O(\log n)$.
\end{proof}

\section{Conclusions}
\label{section:Conclusions}

We provide a set of paths for the multi-sink buy-at-bulk network design problem in planar graphs. 
Contrary to many related work where the source-destination pairs were already given, 
or when the source-set was given, we assumed the obliviousness of the set of source-destination pairs. 
Moreover, we considered an unknown fusion cost function at every edge of the graph. 
We presented nontrivial upper and lower bounds for the cost of the set of paths. 
We have demonstrated that a simple, deterministic, 
polynomial-time algorithm based on sparse covers and shortest-path clustering 
can provide a set of paths between all pairs of nodes in $G$
that can accommodate any set of demands. 
We have shown that this algorithm guarantees $O(\log n)$-approximation. 
As part of our future work, we are looking into obtaining efficient solutions 
to other related network topologies, such as minor-free graphs.

\bibliographystyle{plain}
\bibliography{Bibliography}

%=========================================================================
%  Appendix
%=========================================================================

%\pagebreak \small

\appendix
\newpage
\pagenumbering{roman}
\section*{Appendix: Proofs}

\paragraph{Proof of Lemma \ref{lemma:color-path}}
The algorithm divides the path $p$ into consecutive disjoint subpaths
$p_1, p_2, \ldots, p_\ell$
each of length $4 \gamma$
(except for the last subpath $p_\ell$ which may have shorter length).
The algorithm builds a cluster $X_i$ around each subpath $p_i$
which consists of the $4 \gamma$-neighborhood of $p_i$.
We can show that $\dist(X_i, X_{i+3}) > \gamma$.
Suppose otherwise.
Then, there are nodes $u \in X_i$, $v \in X_{i+3}$,
which are $\gamma$-satisfied in their respective clusters
and $\dist(u,v) \leq \gamma$.
Thus, $u \in X_{i+3}$.
Then, there is a path of length at most $8 \gamma$ that connects
the two paths $p_i$ and $p_{i+3}$ which is formed through $u$.
However, this is impossible since the paths
are at distance at least $8 \gamma + 1$.
Therefore,
we can use a palette of at most $3$ colors to color the
clusters, so that each cluster $X_i$ receives color $(i \mod 3) + 1$.

\paragraph{Proof of Lemma \ref{lemma:color-level}}
Consider a level $i \geq 0$ of $T$.
From  Lemma \ref{lemma:color-path},
the clusters produced in any node $w$ of level $i$
from path $p(w)$
can be colored with $3$ colors.
Consider now two nodes $w_1$ and $w_2$ in level $i$ of $T$.
Let $X$ be a cluster from $p(w_1)$.
Any node which is $\gamma$-satisfied (with respect to $G$)
in $X$, cannot have in its $\gamma$-neighborhood
any node in $G(w_2)$,
since $G(w_1)$ and $G(w_2)$ are disjoint.
Therefore, any two nodes which are $\gamma$-satisfied in
the respective clusters of $w_1$ and $w_2$ have to be at distance
more than $\gamma$ from each other in $G$.
This implies that we can use same palette of 3 colors
for each node in the same level $i$ of the tree.

\paragraph{Proof of Lemma \ref{lemma:color-depth}}
From Lemma \ref{lemma:color-level},
the clusters of each level of the tree $T$ can be
colored with 3 colors.
From the proof in \cite[Lemma 5.4]{1281112},
any node $v \in G$ is clustered in at most $2$ consecutive levels $i, i+1$
of $T$, and does not appear in any subsequent level.
Any node $u$ which is $\gamma$-satisfied in a cluster of level $i+2$ cannot
be with distance of $\gamma$ or less from $v$,
since $v$ doesn't appear in the level $i+2$ subgraph of $G$.
Therefore,
any node $v$ which is $\gamma$-satisfied in a cluster of level $i$
must be at distance more than $\gamma$ than any node $u$ which is $\gamma$-satisfied
in a cluster of level $i+2$.
Therefore the clusters formed at level $i$ are at distance at least $\gamma+1$
from clusters formed at level $i+2$.
Consequently, we can use color palette $[1,3]$ for odd levels
and color palette $[4, 6]$ for even levels,
using in total $6$ colors.

\end{document}